\numberwithin{equation}{section} 
 \newtheorem{lemma}{Lemma}[section]
 \newtheorem{theorem}[lemma]{Theorem}
 \newtheorem{claim}[lemma]{Claim}
 \newtheorem{subclaim}[lemma]{Subclaim}
 \newtheorem{corollary}[lemma]{Corollary}
 \newtheorem{rem}[lemma]{Remark}
\newenvironment{proof}{\par \sc Proof.\rm}{\hspace*{\fill}$\Box$\vspace{1ex}}
 \newtheorem{ex}[lemma]{Example}
\newenvironment{example}{\begin{ex}}{\hspace*{\fill}$\diamondsuit$\end{ex}}
\renewcommand{\emptyset}{\varnothing}
\begin{document}

\title{Exact Expression For Information Distance}
\author{Paul M.B. Vit\'{a}nyi
\thanks{
Paul Vit\'{a}nyi is with the Center for Mathematics and Computer Science (CWI),
and the University of Amsterdam.
Address:
CWI, Science Park 123,
1098XG Amsterdam, The Netherlands.
Email: {\tt Paul.Vitanyi@cwi.nl}.
}}


\maketitle

\begin{abstract}
Information distance can be defined not only between two
strings but also in a finite multiset of strings of cardinality
greater than two.
We determine a best upper bound on the information distance.
It is exact since the 
upper bound on the information distance for all multisets 
is the same as the lower bound 
for infinitely many multisets of each of infinitely many cardinalities,
up to a constant additive term. 

{\em Index Terms}---
Information distance, multiset, 
Kolmogorov complexity,
similarity, 
pattern recognition, data mining. 
\end{abstract}

\section{Introduction}
\label{sect.intro}

The length of a shortest
binary program to compute from one object to another object and vice versa
expresses the amount of information that separates the objects. This
is a proper distance \cite[p. 205]{DD09}, is (almost) a metric, and
spawned theoretic issues. 
Normalized in the appropriate manner it quantifies a similarity between
objects \cite{Li04,CV05,CV07} and is now widely used
in pattern recognition \cite{BOAMJN07}, learning \cite{CGRC10}, 
and data mining \cite{Ke04}. 
Extending this approach we can ask how much
the objects in a set of objects are alike, that is, 
the common information they share.
All objects we discuss are represented as finite 
binary strings and  we use Kolmogorov complexity \cite{Ko65} to express
the central notion of this paper: information distance.
Informally, the Kolmogorov complexity of a string is the 
length of a shortest binary program from which the string can be computed by
a special type of Turing machine. 
It is a lower bound on the length of a compressed
version of that string for any current or future computer.
The text \cite{LV08} introduces the notions, develops the theory,
and presents applications.  

We write {\em string}
to denote a finite binary string. Other finite objects, such as multisets 
of strings (a multiset is a generalization of the notion of a set 
where each member can occur more than once), 
may be encoded into single strings in natural ways. 
The length of a string $x$ is denoted by $|x|$. The {\em empty string} of 0
bits is denoted by $\epsilon$. Thus $|\epsilon|=0$. Denote by a capital 
a finite multiset  
of strings ordered length-increasing lexicographic.
The cardinality $|X|$ of a finite multiset $X$ is the 
number of occurrences of (possibly the same) elements in $X$.
Confusion with the notation of the length of a string is avoided by the context.
In this paper $|X| \geq 2$. Examples are $X=\{x,x\}$ and
$X=\{x,y\}$ with $x \neq y$.  In both cases $|X|=2$.
That is, we use the set notations of $\{\cdot \}$ and $|\cdot|$ 
also for multisets. The logarithms are binary throughout. 
 
A {\em Turing machine} 
has a program tape, an auxiliary tape, one or more work tapes
and an output tape \cite{LV08}. Every tape is semi-infinite and 
divided into squares. At the start the 
input tape is inscribed with the program with
one bit per square from the origin onwards and finishing with a 
special endmarker. (This is sometimes designated as 
a {\em plain} Turing machine.) 
Some Turing machines can simulate every Turing
machine. We call them universal. We need a special type of 
universal machine called optimal \cite{Ko65} see also 
\cite{LV08} which also use short programs.
Let $U$ be a fixed reference {\em optimal universal Turing machine}. 
We denote a computation by $U$ as $U(p,y)=z$ where the input $(p,y)$
consists of $p$ (the program)
which is a string and $y$ (the auxiliary) which is a finite sequence
of strings (in this paper at most two), and $z$ is the output.
Following the notation in the text \cite{LV08} for the ``plain'' Kolmogorov
complexity used here, 
the minimal length of a program for $U$ computing 
a string $x$ with $y$ on the auxiliary tape is the {\em conditional  
Kolmogorov complexity} $C(x|y)$ of $x$ conditional to $y$. The 
{\em unconditional Kolmogorov complexity} is defined as 
$C(x)=C(x|\epsilon)$ with $\epsilon$ denoting the empty string.

In the concatenation $xy$ of a pair of strings $x$ an $y$
we do not know where $x$ ends and $y$ begins. Therefore we design
a version of $x$ which is barely longer than $x$ 
but where we know where $x$ ends.
The {\em self-delimiting} encoding of string $x$ is $1^{|x|}0|x|x$.
If the length of $x$ is equal $n$ then its self-delimiting
encoding has length $n+2 \log n +1$. 
We identify the $n$th tring in $\{0,1\}^*$ ordered
lexicographic length-increasing with the $n$th natural
number $0,1,2, \ldots .$
We denote the 
natural numbers by ${\cal N}$.
A {\em pairing function} uniquely encodes two
natural numbers (or strings) into
a single natural number (or string) by a primitive recursive bijection.
One of the best-known ones \cite{Ca78} is the computationally invertible
Cantor pairing function $\langle \cdot, \cdot \rangle:
{\cal N} \times {\cal N}
\rightarrow {\cal N}$ defined by
$\langle a,b \rangle = \frac{1}{2} (a+b)(a+b+1)+a$.

\subsection{Related Work}
In the seminal \cite{BGLVZ98} the information distance $ID(x,y)$ between 
pairs of strings $x$
and $y$  was introduced as the length of a shortest program $p$
for the reference optimal universal Turing machine $U$ such that
$U(p,x)=y$ and $U(p,y)=x$. It was shown that 
$ID(x,y)=\max\{C(x|y),C(y|x)\}+O(\log \max\{C(x|y),C(y|x)\})$.
Using the prefix variant of Kolmogorov complexity \cite{Li08} 
defined the information 
distance $ID(x_1, \ldots , x_n)$ 
between a set of strings $(x_1, \ldots ,x_n)$ 
as the length of a shortest program
$p$ such that $U(p,x_i,j)=x_j$ for all $1 \leq i,j \leq n$. 
References \cite{Ma09} (for $n=2$) and \cite{Li08} (for $n \geq 2$)
contain related
claims to Claim~\ref{claim.2}.
Reference \cite{Vi11} 
denoted $X=\{x_1, \ldots , x_n\}$ and defined $ID(X)$ as the length
of a shortest program that computes $X$ from every $x \in X$. 
\subsection{Results}
If a program
computes from every $x\in X$ to every $y \in X$ then it must compute $X$
on the way and specify additionally only the index of $y \in X$. 
The essence is to compute $X$. 
If the input also gives the cardinality of $X$ then it is proper to 
define 
\begin{align}\label{eq.idx}
ID(X) & = \min \{|p|: \;|X|=n, \; 
\\& U(p,\langle x,n \rangle)=X \; 
\mbox{\rm for all } x \in X\},
\nonumber
\end{align}
where $p,x \in \{0,1\}^*$ and $n \in {\cal N}$.
The information distance $ID(X)$ can be viewed 
as a {\em diameter} of $X$. For $|X|=2$ it is a conventional 
distance between the 
two members of $X$. Since it is a metric (with minor discrepancies
in the metric inequalities) as shown in \cite{Vi11} the name ``distance''
seems appropriate.
Since the 1990s it was perceived as a nuisance and a flaw that equality
between $ID(X)$ and $\max_{x \in X}\{C(X|\langle x,n \rangle)\}$
held only up to an $O(\log \max_{x \in X}\{C(X|\langle x,n \rangle)\})$ additive term 
(initially $|X|=2$).
We prove that for all finite $X$ holds 
$ID(X) \leq \max_{x \in X}\{C(X|\langle x,n \rangle)\} + \log |X|+O(1)$ and
for infinitely many $n$ there are infinitely many
$X$ with $|X|=n$ with
$ID(X) \geq \max_{x \in X}\{C(X|\langle x,n \rangle)\} + \log |X|-O(1)$.

\section{The Exact Expression}
\begin{theorem}\label{theo.1}
Let $n \geq 2$ be an integer, $X$ be a multiset of $n$ strings 
and $\max_{x \in X} \{C(X|\langle x,n \rangle)\}=k$.
Every multiset $X$ 
of cardinality $n \geq 2$ satisfies 
$ID(X) \leq k+\log n+O(1)$. 
For infinitely many integers $n$ there are infinitely many 
$k$ such that 
there exists a multiset $X$ of cardinality $n$ satisfying
$ID(X) \geq k +\log n - O(1)$. 
\end{theorem}
\begin{proof}
Computably enumerate all $Y$'s of cardinality $n$ 
without repetition such that $\max_{y \in Y} \{C(Y|\langle y,n \rangle)\} \leq k$. 
(Since for every $Y$ the value of $\max_{y \in Y} \{C(Y|\langle y,n \rangle)\}$ 
is upper semicomputable\footnote{A real function $f$ with rational 
arguments $x,y$ is {\em upper semicomputable}
if it is defined by a rational-valued computable function $\phi(x,y,k)$
 with $x,y$ rational numbers and $k$ a nonnegative integer
such that $\phi(x,y,k+1) \leq \phi(x,y,k)$ for every $k$ and
$\lim_{k \rightarrow \infty} \phi (x,y,k)=f(x,y)$.
This means that $f$
can be computably approximated arbitrary close from above.}
these $Y$'s can be
computably enumerated.) Let ${\cal Y}$ be the set of these $Y$. 
The set ${\cal Y}$ is in general infinite since
already for $n=2$ and large enough $k$ 
it contains $\{x,x\}$ for every string $x$.
Define a bipartite graph $G= (V,E)$ with $V$ the vertices
and $E$ the edges by
\begin{align*}
V_1 & = \{Y: Y \in {\cal Y}\},
\\V_2 & = \{y: y \in Y \in V_1\},
\\V & =V_1 \bigcup V_2,
\\E & = \{(Y,y): y \in Y \in V_1 \}.
\end{align*}
We want to determine a labeling of 
every edge $(Y,y) \in E$ such that 
for each $Y \in {\cal Y}$ and $y \in Y$ the labeling satisfies:

(i) all edges incident with the same vertex in
$V_1$ are labeled with identical labels; and 

(ii) all different edges incident with the same vertex in $V_2$
are labeled with different labels. 

It follows from conditions (i) and (ii), that if two 
vertices $U,W \in V_1$ satisfy
$U \bigcap W \neq \emptyset$ then the edges incident on $U$
are labeled differently from the edges incident on $W$.
By \eqref{eq.idx} a vertex in $V_2$ and the cardinality of the target 
vertex together with a program of length at most $k$ determines 
a vertex in $V_1$. 
Using these programs as labels, we obtain  a labeling satisfying (i)--(ii).
We want to determine an optimal or nearly optimal upper bound
on the number of labels required. This is done informally
at first in order to determine the structure of these labels. 
In Claim~\ref{claim.2}
a formal proof of the upper bound is presented.

Let $Y \in {\cal Y}$.
Since $C(Y|\langle y,n \rangle) \leq k$ for every $y \in Y$
there are at most $f(k)= \sum_{i=0}^k 2^i=2^{k+1}-1$ 
programs computing from $y$ to different members of ${\cal Y}$. 
Therefore each vertex $y \in V_2$ has degree at most 
$f(k)$ and is connected
by an edge with a vertex $Y \in V_1$ for which holds $y \in Y$.
There are $n$ or less different vertices in $V_2 \bigcap Y$.
Each vertex in $V_2 \bigcap Y$ may be connected by an edge 
with at most $f(k)-1$
different vertices in $V_1 \setminus \{Y\}$ apart 
from the one edge incident on $Y$.  
The labels on the edges incident on $Y$ from each $y \in Y$ 
are identical but different from the labels on the other edges incident
on each $y \in Y$. 
This results in an upper bound of $nf(k)-(n-1)$ different labels, 
namely at most $n(f(k)-1)$ labels for the edges incident on 
different vertices in $V_1 \setminus \{Y\}$
and 1 label for the at most $n$ edges incident on $Y$.
Let $P(k)$ be the set of strings of length at most $k$. Then $|P(k)|=f(k)$.
We define $Q(k) =P(k) \times \{1, \ldots ,n\}$ where 
every $(p,m) \in Q(k)$ is described by a string 
\begin{equation}\label{eq.label}
\underbrace{0^{k-|p|}1p}
\underbrace{0^{|n|-|m|}m}
\end{equation}
with the different blocks marked by $\underbrace{}$.
The strings $m$ and $n$ are the standard 
binary representations of the nonnegative integers $m$ and $n$ starting
with a 1. 
Assuming that we know $n$ and $k$ this description can be uniquely parsed.
The first block is $0^{k-|p|}1p$ with $p \in P(k)$. We can determine
where $p$ starts and since the length of the block is $k+1$ we know
which bit of $p$ is the last one. The second block
with leading nonsignificant 0's and $m$ 
right adjusted ($m \leq n$)
has length $|n|$. Therefore we know where it starts
and where it ends.
By this construction the length of the description of each member 
of $Q(k)$ is $k+|n|+1$.
The description can be parsed uniquely from left to right.
Therefore every {\em label} (member) in $Q(k)$ is represented by a string 
from which $k$ can be extracted if we know $n$.

\begin{claim}\label{claim.2}
For every finite integer $n \geq 2$ every multiset $X$
of cardinality $n$ satisfies $ID(X) \leq k +\log n +O(1)$.
\end{claim}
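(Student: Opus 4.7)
The plan is to invert the labeling constructed in the proof of Claim~\ref{claim.1}. By that construction, for every multiset $X$ of cardinality $n$ with $\max_{x\in X} K(X|x)\le k$ and every $x\in X$, the edge $(X,x)$ of the bipartite graph $G$ carries some label $q_X\in Q(k)$, and condition~(ii) guarantees that the pair $(q_X,x)$ determines $X$ uniquely among all multisets in $\mathcal{Y}$. Since each $q\in Q(k)$ is represented by a self-delimiting string of length $k+\lfloor\log n\rfloor+1$ from which $k$ (and, through the total length, $n$) can be extracted, it suffices to exhibit a single fixed program $D$ of $O(1)$ bits that, given $(q,x)$, reconstructs the labeling and outputs the unique $Y$ with $x\in Y$ for which the edge $(Y,x)$ bears label $q$.

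The first step I would take is to upgrade the enumeration $Y_1,Y_2,\ldots$ of $\mathcal{Y}$ used in Claim~\ref{claim.1} from ``possibly incomputable'' to fully computable. This is feasible because the predicate ``$\max_{y\in Y}K(Y|y)\le k$'' is recursively enumerable: $Y$ is finite, and for each $y\in Y$ one dovetails through all self-delimiting programs of length at most $k$ and declares success as soon as one is found that outputs $Y$ on input $y$. With such a computable enumeration fixed, the base case and inductive step of Claim~\ref{claim.1} become a deterministic on-line labeling algorithm: whenever $Y_m$ is emitted, assign every edge incident on $Y_m$ the least label in $Q(k)\setminus Q'(k)$, and never revise it. Availability of this label at each step is exactly the inequality $|Q'(k)|+1\le nf(k)-(n-1)<|Q(k)|$ already verified there.

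The decoder $D$ then operates as follows. On input $(q,x)$ it parses $q$ to extract $k$, $n$, and the underlying pair $(p,m)$, simulates the computable enumeration of $\mathcal{Y}$, runs the greedy labeling above in parallel, and halts as soon as it encounters a $Y$ with $x\in Y$ on which the edge $(Y,x)$ has been assigned label $q$, outputting this $Y$. Conditions~(i) and~(ii) ensure that such a $Y$ exists and is unique, so when fed $(q_X,x)$ for any $x\in X$ the decoder outputs $X$. Consequently $Dq_X$ is a single self-delimiting program of length $|D|+|q_X|=O(1)+k+\lfloor\log n\rfloor+1=k+\log n+O(1)$ computing $X$ from every $x\in X$, which yields $ID(X)\le k+\log n+O(1)$.

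The point that requires real care is justifying the genuine $O(1)$ overhead: it rests entirely on the label encoding being self-delimiting with $k$ syntactically recoverable from $q$, so that $D$ need not hard-code any parameter depending on $n$ or $k$. Given this, the rest is bookkeeping; the only subtlety is that the on-line labeling used by $D$ must be \emph{exactly} the one used to define $q_X$, which is why the enumeration of $\mathcal{Y}$ and the greedy rule must be fixed once and for all inside $D$.
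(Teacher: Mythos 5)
Your proposal is correct and follows essentially the same route as the paper's own proof: a computable (recursive) enumeration of ${\cal Y}$, the greedy on-line labeling inherited from Claim~\ref{claim.1}, and a fixed $O(1)$-bit decoder prefixed to the self-delimiting label $q_X$ so that the pair $(q_X,x)$ determines $X$ by condition~(ii). Your added justification that the enumeration is computable via dovetailing over programs of length at most $k$ is a welcome explicit detail, but the argument and the resulting bound $|Dq_X|=k+\log n+O(1)$ are the same as in the paper.
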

\begin{proof}
First we formally show 
that the number of labels in $Q(k)$ is sufficient.
Namely, by induction on the enumeration $Y_1, Y_2, \ldots $ of 
the vertices in $V_1$ we show
that the edges arising can be labeled by at most $|Q(k)|$ labels. It
is convenient to order 
$Q(k)$ lexicographic with the first coordinate
according to the lexicographic length-increasing order and the second coordinate
according to the usual order $1 < \cdots < n$.

{\em Base case} ($m=1$) Label all edges incident on $Y_1$ with the least 
label in $Q(k)$.
This labeling satisfies condition (i), and condition (ii)
is satisfied vacuously. 

{\em Induction} ($m > 1$) Assume that all edges incident on vertices 
$Y_1, \ldots ,Y_m$ have been labeled satisfying conditions (i) and (ii). 
Label the edges incident on $Y_{m+1}$ by the least label in
$Q(k) \setminus Q'(k)$ where $Q'(k)$ is defined below and it is shown there
that the set difference is non-empty.
Every edge incident on a vertex $y \in Y_{m+1}$ and vertex $Y_{m+1}$
must be labeled by the same label by condition (i). 
Every $y \in Y_{m+1}$ is connected
by an edge with at most $f(k)-1$ vertices
in $V_1$ (excluding $Y_{m+1}$). Hence $Y_{m+1}$ is connected 
by a path of length 2 via some vertex $y \in Y_{m+1}$ (there are
at most $n$ such vertices) with at most $n(f(k)-1)$ different vertices in 
$Y_1, \ldots , Y_m$.
Let ${\cal Z}$ be the set of these vertices and $Q'(k)$
be the set of labels on the edges in these paths incident on
a vertex in the set ${\cal Z}$. Then $|Q'(k)| \leq n(f(k)-1)$.
Since $|Q(k)|=nf(k)$ and $n \geq 2$ the set 
difference $Q(k) \setminus Q'(k) \neq \emptyset$. 
We label in the lexicographic order of $Q(k)$ 
such that the labels in $Q'(k)$ are the least labels in $Q(k)$.
To satisfy condition (ii) the label on an edge incident on
$Y_{m+1}$ is not in $Q'(k)$. To satisfy condition (i) all
labels on edges incident on $Y_{m+1}$ are the same and 
therefore can be labeled by the least element from $Q(k) \setminus Q'(k)$.
{\em End induction} 

Represented according to \eqref{eq.label} the labels in $Q(k)$ 
have length $k+ |n|+1$.
Let $r$ be an
$O(1)$-length self-delimiting program. 
Since $n$ is given, program $r$ can extract $k$ from the length of 
the label and make the reference machine $U$ generate graph 
$G$ and do the labeling process.
Let the edge connecting $y \in Y$ with $Y \in V_1$ 
be labeled by $u \in Q(k)$.
Since all edges $(Y,y)$ with $y \in Y$ 
have the same label $u$ by condition (i)
and $u$ does not label any edge incident on
$Z \in V_1$ with $Z \bigcap Y \neq \emptyset$ by condition (ii)
we can define $s_{Y}=u$.

The length of $rs_X$ is an upper bound on $ID(X)$ as follows. 
In the computation $U(rs_X, \langle x,n \rangle )=X$
the machine $U$ uses first the $O(1)$-bit 
program $r$.
This $r$ retrieves $k$ from $|s_X|= k+|n|+1$. 
Next $r$ computably enumerates ${\cal Y}$ and therefore $G$. 
Subsequently $r$
labels the edges of $G$ in a standardized manner satisfying conditions (i)
and (ii) with labels in $Q(k)$.
It does so until it labels an edge by $s_X$ which is incident on 
vertex $x$. Since the label $s_X$ 
is unique for edges $(X,y)$ with $y \in X$ the program $r$ using 
$x$ finds edge $(X,x)$ and therefore $X$. Since 
$|rs_X| =k+\log n+O(1)$ this implies the claim.
\end{proof}

\begin{claim}\label{claim.1}
There are infinitely many integers $n \geq 2$ such that for 
infinitely many 
$X$ with $|X|=n$ and $\max_{x \in X} C(X|\langle x,n \rangle) \leq k$ 
we have $ID(X) \geq k +\log n- O(1)$.
\end{claim}
\begin{proof}
($n=2$): The claim is immediate since if $\max_{x \in X} C(X|\langle x,n \rangle) =k$ 
then $ID(X) \geq k$. 

($n > 2$): 
The following simple example is illustrative for the general 
principle involved.
\begin{example}
\rm
The sets $A=\{1,2\}, B=\{2,3\}, C=\{3,1\}$ 
are three sets of cardinality two that intersect each other pairwise,
every integer from $\{1,2,3\}$ is in two sets 
and $A \bigcap B \bigcap C = \emptyset$.
By making $M$ copies of sets $A$, $B$ and $C$ and enlarging  
each copy with a unique new integer not equal to 1,2, or 3,
we obtain $3M$ sets of cardinality 
three that intersect each other pairwise only. That is,
integers 1, 2 and 3 belong to $2M$ sets each and no integer
belongs to all $3M$ sets.
The intersections of the $3M$ sets are not centralized 
in a single integer but distributed over different integers.
It is impossible to prove the claim without this distributive property.
\end{example}

We start the proof proper here. Consider sets of cardinality $n-1$.
First use an argument from projective geometry as described 
in the texts \cite{Co87, Ka76}.
Represent each set as a line in the projective plane with the members of
the set as points on the line.  
Let integer $q$ be a prime power, 
$n=q+2$, and $k$ an element in an infinite sequence of integers which
satisfies $2^k < t(q+1)\leq 2^{k+1}$ and $k \geq 2\log n +c$ 
for some $t \in {\cal N}$ and a constant $c >0$ defined later. 
Let $(P,L)$ be the projective plane over $GF(q)$ with $P$ the set of points
and $L$ the set of lines. (Then $|P|=|L|=q^2+q+1$, every point is on $q+1$
lines and every line contains $q+1$  points. Every pair of lines intersect.)
Add $t|L|$ dummy points. For every line $l \in L$ make $t$ 
copies of $l$ and
add to each of the resulting lines a different dummy point such 
that all sets of points on a line become different. 
Let $F$ be the resulting collection of
sets of cardinality $n$.
Then every set in $F$ is different and every two sets in $F$ have a 
nonempty intersection (the two
corresponding lines intersect at a point). Every point is in
$t(q+1)$ sets in $F$. Moreover 
$|F|=t(q^2+q+1) > n2^k-2^{k+2}+t > n2^k-2^{k+2}+2^k/(n-1)$. 

\begin{subclaim}\label{sclaim}
\rm
$F \subseteq {\cal Y}$. 
\end{subclaim}
\begin{proof}
Each $Y \in F$ is a set of 
$|Y|=q+2 (=n)$ points on a corresponding 
line in the projective plane. 
Here $q+1$ points of $Y$ are among the $q^2+q+1$ points of $P$
in the projective plane proper  
and one point of $Y$ is a special dummy point $dp  \not\in P$ such that all 
$Y \in F$ are unique. 
Recall that $q$ is given since $n=q+2$ is given. 
An effective description of $Y\setminus \{dp\}$ given $q$ is as follows.
\begin{itemize}
\item 
Construction of $(P,L)$ given $q$. If there are more projective
planes than one then take the first one enumerated.
This takes constant number of bits in a self-delimiting
program.
\item
Description of the line $l \in L$ such that the set of points on $l$ 
equals $Y \setminus \{dp\}$. Since $|L|=q^2+q+1$ a line in $L$ can
be selected given $L$ in at most $3 \log q$ bits. Since this item can be
the last item in the description it need not be self-delimiting.
\item
A self-delimiting program of a constant number of bits
to construct $Y$ from the items above.
\end{itemize}
Since $|L|=q^2+q+1=n^2-3n+7$ this description can be given in $2 \log n +c_1$ 
bits with $c_1 \geq 0$ a constant.
Since $C(Y|\langle dp,n \rangle) = 
C(Y\setminus \{dp\}| \langle dp,n \rangle)+c_2$ for a constant $c_2 \geq 0$ 
it follows that 
$C(Y\setminus \{dp\}|\langle dp,n \rangle) \leq k-c_2$ 
iff $C(Y|\langle dp,n \rangle )\leq k$.
If $\max_{y \in Y}C(Y|\langle y,n \rangle) \leq k$
then $C(Y| \langle dp,n \rangle )\leq k$. 
Hence every set $Y \in F$ satisfying 
$\max_{y\in Y}C(Y|\langle y,n \rangle) \leq k$ with
$k \geq 2 \log n+c$ with $c=c_1+c_2$  is in ${\cal Y}$ 
and therefore $F \subseteq {\cal Y}$.
\end{proof}

\begin{subclaim}\label{s1claim}
\rm
To label the edges incident on members of $F$ there are $|F|$ labels required.
\end{subclaim}
\begin{proof}
By construction all the sets in $F$ are different and 
every two sets in $F$ have a nonempty intersection.
It therefore follows from conditions (i) and (ii) that if 
$Y_1,Y_2 \in F$ and
$Y_1 \neq Y_2$ then all edges incident on $Y_1$ are labeled with
the same label but a different one from the label that labels all 
edges incident on $Y_2$. 
\end{proof}

To complete the proof of the main claim equip ${\cal Y}$ and $F$
with subscripts $n,k$ writing ${\cal Y}_{n,k}$ and $F_{n,k}$,
respectively.
There are infinitely many $n=q+2$ with $q$ a prime power, 
and for every such $n$ there are infinitely many $k$ satisfying
$2^k<t(q+1) \leq 2^{k+1}$ and $k \geq 2 \log n +c$ for some $t \in {\cal N}$. 
Call these $n$ and $k$ the {\em good} $n$ and $k$. 
By Subclaim~\ref{sclaim} for the good $n$ and $k$ we have
$F_{n,k} \subseteq {\cal Y}_{n,k}$. 
By Subclaim~\ref{s1claim} for the good $n$ and $k$ holds that
for each $F_{n,k}$ there
are $|F_{n,k}|$ different labels required. 
Using programs as labels requires therefore $|F_{n,k}|$ different programs.
Hence for each pair of good $n$ and $k$ there is a program 
$p_{n,k}$ of length 
at least $\log |F_{n,k}|$ labeling the
edges incident on some set $Y_{n,k} \in F_{n,k}$. That is,
$U(p_{n,k}, \langle y,n \rangle)=Y_{n,k}$ for every $y \in Y_{n,k}$.
Altogether, for every pair of good integers $n$ and $k$
we have $Y_{n,k} \in F_{n,k} \subseteq {\cal Y}_{n,k}$. 
Hence for infinitely many $n$ and for each such $n$ 
for infinitely many $k$ there is a multiset $Y_{n,k}$ with $|Y_{n,k}|=n$
and $\max_{y \in Y_{n,k}} C(Y_{n,k}|\langle y,n \rangle)\leq k$ such that
$ID(Y_{n,k}) \geq \log |F_{n,k}| \geq k+\log n -O(1)$
since $\log |F_{n,k}| > \log (n2^k-2^{k+2} +2^k/(n-1))
=k+\log n + \log (1- 4/n+ 1/(n(n-1))))= k+\log n -O(1)$ for $n\geq 5$.
\end{proof}
\end{proof}

\begin{corollary}
For $|X|=2$ Claim~\ref{claim.2} shows the result of
\cite[Theorem 3.3]{BGLVZ98} with error term $O(1)$ instead of 
$O(\log \max_{x \in X} \{C(X|\langle x,n \rangle)\})$. That is, with $X=\{x,y\}$
the theorem computes $x$ from $y$ and $y$ from $x$ with the same
program of length $\max_{x \in X} \{C(X|\langle x,n \rangle)\}+O(1)$. (One simply adds
to program $r$ the instruction  ``the other one'' in $O(1)$ bits.) 
\end{corollary}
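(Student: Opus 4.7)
The plan is to derive the corollary as a direct specialization of Theorem~\ref{theo.1} to the case $n=2$, followed by a short self-delimiting wrapper that converts a set-emitting program into a pair-reconstruction program. First I would set $X=\{x,y\}$ and invoke Theorem~\ref{theo.1} with $n=2$ to obtain a program $s_X$ of length at most $k + \log 2 + O(1) = k + O(1)$, where $k = \max_{z \in X}\{K(X|z)\}$, such that $U(s_X,z) = X$ for every $z \in X$.

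Next I would form the program $s'_X = r' s_X$, where $r'$ is a fixed $O(1)$-bit self-delimiting routine (the ``the other one'' instruction mentioned in the corollary) that, upon receiving an input $z$, simulates $s_X$ on $z$ to recover the multiset $X=\{x,y\}$ and then outputs the element of $X$ distinct from $z$ (or $z$ itself in the degenerate case $X=\{z,z\}$). Since $r'$ is independent of $x$ and $y$ and $s_X$ is self-delimiting by the construction of Theorem~\ref{theo.1}, we have $|s'_X| \le k + O(1)$, $U(s'_X,x)=y$, and $U(s'_X,y)=x$.

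Finally I would translate the term $k$ into the classical form $\max\{K(x|y),K(y|x)\}$. For $X=\{x,y\}$, knowing $z \in X$ the multiset $X$ can be produced in $O(1)$ bits beyond a description of the other element (by a fixed pairing routine) and conversely the other element can be extracted from $X$ and $z$ in $O(1)$ bits, so $K(X|z) = K(w|z)+O(1)$ with $w$ denoting the element of $X$ different from $z$. Hence $k = \max\{K(x|y), K(y|x)\} + O(1)$, which yields the promised bound $ID(x,y) \le \max\{K(x|y), K(y|x)\}+O(1)$ and recovers \cite[Theorem~3.3]{BGLVZ98} with the $O(\log \cdot)$ additive term replaced by $O(1)$. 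The only real obstacle is checking that the wrapper $r'$ and the pairing routine genuinely cost only constantly many bits and preserve self-delimitation; both facts are routine, but they must be stated explicitly lest the promised additive $O(1)$ absorb a hidden logarithmic cost.
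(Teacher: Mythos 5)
Your proposal is correct and follows essentially the same route as the paper: specialize Theorem~\ref{theo.1} to $n=2$ so that a single program $s_X$ of length $k+\log 2+O(1)=k+O(1)$ computes $X$ from either element, then prepend the $O(1)$-bit ``the other one'' wrapper, exactly as the paper's parenthetical remark indicates. Your additional observation that $K(X|z)=K(w|z)+O(1)$, which ties $k$ back to the classical $\max\{K(x|y),K(y|x)\}$ form of \cite[Theorem 3.3]{BGLVZ98}, is a correct and worthwhile explicit step that the paper leaves implicit.
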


\begin{corollary}
If the cardinality $n$ of $X$ is unknown we define
\[
ID'(X)= \min \{|p|: \; U(p,x)=X \; \mbox{\rm for all } x \in X \}.
\]
The same proof of the upper bound of Theorem~\ref{theo.1} shows that
for $|X|=n$ we have $ID'(X) \leq ID(X)+C(n)+ 2 \log C(n)+O(1)$ by adding in
the proof of Claim~\ref{claim.2} a
self-delimiting program computing $n$ 
of length $C(n) + 2 \log C(n)+O(1)$.
With respect to the lower bound the number of labels required stays the same
as in Claim~\ref{claim.1}. Hence the lower bound on $ID(X)$ is the same
as the lower bound on $ID'(X)$.
\end{corollary}

\section*{Acknowledgment}
Bruno Bauwens and the referees gave helpful comments and pointed out errors.
The Projective Geometry details in Claim~\ref{claim.1} were
provided by Lex Schrijver on commission.

\bibliographystyle{plain}

\begin{IEEEbiography}
{Paul M.B. Vit\'anyi} received his Ph.D. from the Free University
of Amsterdam (1978). He is a CWI Fellow at
the national research institute for mathematics and computer
science in the Netherlands, CWI,
and Professor of Computer Science
at the University of Amsterdam.  He served on the editorial boards
of Distributed Computing, Information Processing Letters,
Theory of Computing Systems, Parallel Processing Letters,
International journal of Foundations of Computer Science,
Entropy, Information,
Journal of Computer and Systems Sciences (guest editor),
and elsewhere. He has worked on cellular automata,
computational complexity, distributed and parallel computing,
machine learning and prediction, physics of computation,
Kolmogorov complexity, information theory, quantum computing, publishing
more than 200 research papers and some books. He received a Knighthood
(Ridder in de Orde van de Nederlandse Leeuw) and is member of the
Academia Europaea. Together with Ming Li
they pioneered applications of Kolmogorov complexity
and co-authored ``An Introduction to Kolmogorov Complexity
and its Applications,'' Springer-Verlag, New York, 1993 (3rd Edition 2008),
parts of which have been translated into Chinese,  Russian and Japanese.
\end{IEEEbiography}

\end{document}